\documentclass[english,runningheads,a4paper]{llncs}
\usepackage[T1]{fontenc}
\usepackage[latin9]{inputenc}
\usepackage[b5j]{geometry}
\usepackage{float}
\usepackage{graphicx}

\makeatletter

\floatstyle{ruled}
\newfloat{algorithm}{tbp}{loa}
\providecommand{\algorithmname}{Algorithm}
\floatname{algorithm}{\protect\algorithmname}

\newtheorem{thm}{\protect\theoremname}
\newtheorem{defn}[thm]{\protect\definitionname}
\newtheorem{lem}[thm]{\protect\lemmaname}
\newtheorem{prop}[thm]{\protect\propositionname}

\makeatother

\usepackage{babel}
\providecommand{\definitionname}{Definition}
\providecommand{\lemmaname}{Lemma}
\providecommand{\propositionname}{Proposition}
\providecommand{\theoremname}{Theorem}

\begin{document}

\title{A Parameterized Approximation Algorithm for The Shallow-Light Steiner
Tree Problem
\thanks{This project was supported by the Natural Science
Foundation of Fujian Province (2012J05115), Doctoral Fund of Ministry
of Education of China for Young Scholars (20123514120013) and Fuzhou
University Development Fund (2012-XQ-26). }
}
\author{{Longkun Guo$^{1}$, Kewen Liao$^{2}$}, XiuJun Wang$^{1}$}
\institute{{$^{1}$School of Mathematics and Computer Science,
Fuzhou University, China}\\
{$^{2}$School of Computer Science, University of
Adelaide, Australia}}

\maketitle
\begin{abstract}
For a given graph $G=(V,\, E)$ with a terminal set $S$ and a selected
root $r\in S$, a positive integer cost and a delay on every edge
and a delay constraint $D\in Z^{+}$, the shallow-light Steiner tree
(\emph{SLST}) problem is to compute a minimum cost tree spanning the
terminals of $S$, in which the delay between root and every vertex
is restrained by $D$. This problem is NP-hard and very hard to approximate.
According to known inapproximability results, this problem admits
no approximation with ratio better than factor $(1,\, O(\log^{2}n))$
unless $NP\subseteq DTIME(n^{\log\log n})$ \cite{khandekar2013some},
while it admits no approximation ratio better than $(1,\, O(\log|V|))$
for $D=4$ unless $NP\subseteq DTIME(n^{\log\log n})$ \cite{bar2001generalized}.
Hence, the paper focus on parameterized algorithm for \emph{SLST}.
We firstly present an exact algorithm for \emph{SLST} with time complexity
$O(3^{|S|}|V|D+2^{|S|}|V|^{2}D^{2}+|V|^{3}D^{3})$, where $|S|$ and
$|V|$ are the number of terminals and vertices respectively. This
is a pseudo polynomial time parameterized algorithm with respect to
the parameterization: ``number of terminals''. Later, we improve
this algorithm such that it runs in polynomial time $O(\frac{|V|^{2}}{\epsilon}3^{|S|}+\frac{|V|^{4}}{\epsilon}2^{|S|}+\frac{|V|^{6}}{\epsilon})$
, and computes a Steiner tree with delay bounded by $(1+\epsilon)D$
and cost bounded by the cost of an optimum solution, where $\epsilon>0$
is any small real number. To the best of our knowledge, this is the
first parameterized approximation algorithm for the \emph{SLST} problem. \end{abstract}
\begin{keywords}
Shallow light Steiner tree, parameterized approximation algorithm,
directed Steiner tree, exact algorithm, auxiliary graph, pseudo-polynomial
time complexity.
\end{keywords}

\section{Introduction}

The well-known shallow-light Steiner tree problem (or namely the delay
restrained minimum Steiner tree problem) is defined as below:
\begin{defn}
For a graph $G=(V,\, E)$ with a terminal set $S$, a root vertex
$r\in S$, a cost function $c:\, E\rightarrow Z^{+}$, a delay function
$d:\, E\rightarrow Z^{+}$, and a delay bound $D\in Z^{+}$, the\emph{
shallow-light Steiner tree (SLST) problem} is to compute a minimum
cost Steiner tree $slst$ spanning all terminals of $S$, such that
the delay from $r$ to every terminal in $slst$ is not larger than
$D$.
\end{defn}
For notation briefness, we assume $|V|=n,\,|E|=m,\,|S|=t$ in graph
$G$, and use \emph{SLST }and\emph{ $slst$} to denote the shallow-light
Steiner tree problem and an optimal shallow-light Steiner tree respectively.
 For the \emph{SLST} problem, bifactor approximation algorithms have
been developed.
\begin{defn}
An algorithm $A$ is a bifactor $\left(\alpha,\,\beta\right)$-approximation
for the \emph{SLST} problem, if and only if for every instance of
\emph{SLST, $A$} computes a Steiner tree $slst$ in polynomial time,
such that the delay from $r$ to every terminal in $slst$ is bounded
by $\alpha*D$ and the cost of $Slst$ is bounded by $\beta$ times
of the cost of the optimal solution.

Noting that single factor $\beta$-approximation is identical to bifactor
$\left(1,\,\beta\right)$-approximation for \emph{SLST}, we use them
interchangeably in the text.
\end{defn}
\textbf{Related Work.} It is known that the \emph{SLST} problem is
NP-hard, and can not be approximated better than factor $(1,\, O(\log^{2}n))$
unless $NP\subseteq DTIME(n^{\log\log n})$ \cite{khandekar2013some}.
This is because the group Steiner tree problem can be embedded into
this problem. Furthermore, no polylogarithmic approximation within
polynomial time complexity has been developed. The best work is a
long standing result due to Charikar et al, which is a polylogarithmic
approximation in quasi-polynomial time, i.e. factor-$O(\log^{2}t)$
approximation within time complexity $n^{O(\log t)}$\cite{charikar1998approximation}.
Due to the difficulty in single factor approximation algorithm design,
bifactor approximation has been investigated. Hajiaghayi et al presented
an $(O(\log^{2}t),\, O(\log^{4}t))$-approximation algorithm that
runs in polynomial time \cite{hajiaghayi2006approximating}. Besides,
Kapoor and Sarwat gave an approximation with bifactor $(O(\frac{p\log t}{\log p}),O(\frac{\log t}{\log p}))$,
where $p$ is an input parameter \cite{kapoor2007bounded}. The last
algorithm is an approximation that improves the cost of the tree,
and is with bifactor $(O(t),\, O(1))$ when $p=t$ \cite{kapoor2007bounded}.

The \emph{SLST }problem remains hard to approximate even when $S=V$.
In that case, this problem becomes the shallow light spanning tree
(SLT) problem, which has broad applications in network design, VLSI
and etc. For computational complexity, the SLT problem is claimed
to be with inapproximability hardness of $(1,\,\Omega(\log n))$ \cite{naor1997improved}.
For approximation, Charikar et al's $O(\log^{2}n)$ ratio with time
complexity $n^{O(\log n)}$\cite{charikar1998approximation} is still
the best single factor result. Naor and Schieber gave an approximation
bifactor of $(2,\, O(\log n))$, i.e. with delay and cost bounded
by 2 times and $O(\log n)$ times of that of the optimal solution
respectively \cite{naor1997improved}. To the best of our knowledge,
these are the best long standing approximation ratios. Some special
cases of the SLT problem are also interesting. If edge cost is equal
to the delay for each edge, the SLT problem remains NP-hard and admit
no approximation algorithms with bifactor $(\alpha,\,\beta)$ for
any $\alpha>1$ and $1\leq\beta<1+\frac{2}{\alpha-1}$ \cite{khuller1995balancing},
while the best possible result for \emph{SLST} is a $(1+\epsilon,\, O(\log(\frac{1}{\epsilon})))$-approximation
\cite{elkin2011steiner}.  Moreover, the SLT problem remains NP-hard
when all edge delays are equal, but polynomially solvable when all
edge costs are equal \cite{salama1997delay}. For the equal-delay
case, namely the hop constrained minimum spanning tree problem, Althaus
et al have presented an approximation with a ratio of $(1,\, O(\log n))$
in \cite{althaus2005approximating}.

Another two important special cases of the \emph{SLST} problem is
when $D$ is constant or when all edge delays are equal. Unfortunately,
for the former case, \emph{SLST} can not be approximated better than
a factor of $(1,\, O(\log n))$ for even $D=4$ unless $NP\subseteq DTIME(n^{\log\log n})$
\cite{bar2001generalized}, since the Set Cover problem can be embedded
into this case. Bar-Ila et al also developed a factor-$(1,\, O(\log n))$
approximation for the cases of $D=4,\,5$ in the same paper, achieving
the best possible ratio. When all edge delays are equal, namely the
hop constrained minimum Steiner tree problem, it is open that if there
exists factor-$(1,\, O(\log n))$ approximation for this problem,
as the spanning case.

\textbf{Our Contribution. }The first result of this paper is an exact
algorithm, with time complexity $O(3^{|S|}|V|D+2^{|S|}|V|^{2}D^{2}+|V|^{3}D^{3})$,
for the \emph{SLST} problem. This result indicates that if the number
of terminal and the delay constraint are bounded, the \emph{SLST}
problem is polynomial solvable. Our technique is mainly based on constructing
an auxiliary graph, where every Steiner tree satisfies the delay constraint,
i.e. in the auxiliary graph, we only need to compute Steiner tree
without considering the delay constraint. Though its time complexity
seems terrible, the exact algorithm is efficient for real-world applications
for $|S|<80$, particularly when $D=o(n)$ or all edge delays are
equal (the hop constrained minimum Steiner tree problem).

On the theoretical side, we note that this algorithm runs in pseudo
polynomial time (for constant $|S|$), since $D$ appears in the formula
of the time complexity. The second result is to improve this time
complexity to polynomial time $O(\frac{|V|^{2}}{\epsilon}3^{|S|}+\frac{|V|^{4}}{\epsilon}2^{|S|}+\frac{|V|^{6}}{\epsilon})$
following a similar line of polynomial-time approximation scheme (PTAS)
design, such that it computes a Steiner tree with delay bounded by
$(1+\epsilon)D$ and cost bounded by the cost of an optimum solution.

\section{A Parameterized Approximation Algorithm for the Shallow Light Steiner
Problem}

In this section we shall approximate the shallow-light Steiner tree
(\emph{SLST}) problem. Firstly and intuitively, our main observation
is that the difficulty of computing a \emph{$slst$} comes from obeying
the given delay constraint. Therefore, our key idea is to construct
an auxiliary directed graph $H$ where there exists only cost (i.e.
no delay) on edges, such that every Steiner tree (spanning the same
terminal set) in $H$ corresponds to a Steiner tree that satisfies
the given delay constraint $D$ in $G$. Secondly since the directed
Steiner tree problem is known parameterized tractable with respect
to the parameterization: ``number of terminals''\cite{guo2011parameterized,ding2007finding},
an exact algorithm is immediately obtained; then an approximation
algorithm with ratio $(1+\epsilon,\,1)$ can be derived from the exact
algorithm by a method of shrinking the value of $D$. The approximation
algorithm computes a $slst$ with delay bounded by $D(1+\epsilon)$
and cost bounded by the cost of an optimum $slst$.

\subsection{Construction of the Auxiliary Graph}

Though different in technique details, the key idea to construct the
auxiliary graph is similar to the auxiliary graph used to balance
the cost and delay of $k$ disjoint shortest paths in \cite{guo2013improved}:
using layer graphs. For a given graph $G=(V,E)$ with positive integer
cost and delay on every edge, and a delay constraint $D$, the layer
graph $H$, i.e. the auxiliary graph to be constructed, contains vertices,
terminals and edges roughly as in the following:
\begin{enumerate}
\item $D$ vertices $v_{l}^{1},\dots,v_{l}^{D}$ corresponding to every
vertex $v_{l}\in G$ ;
\item $D-d(e)$ edges $\left\langle v_{j}^{1},\, v_{l}^{d(e)+1}\right\rangle ,\,\dots,\left\langle v_{j}^{D-d(e)},\, v_{l}^{D}\right\rangle $
corresponding to every edge $e=\left\langle v_{j},v_{l}\right\rangle \in E$
and with $c(\left\langle v_{j}^{i},\, v_{l}^{d(e)+i}\right\rangle )=c(e)$;
\item one terminal $v_{l}$, corresponding to every terminal $v_{l}\in S\subseteq G$,
together with cost-0 edges $\{\left\langle v_{l}^{i},\, v_{l}\right\rangle \vert i=1,\dots,D\}$
that connect auxiliary vertices of $v_{l}$ to the auxiliary terminal;
\end{enumerate}
Therefore, $H$ has $O(|V|*D)$ vertices, $O(|E|*D)$ edges, and $|S|$
terminals. The construction is formerly as in Algorithm \ref{alg:Construction-of-auxiliary}
(An example of such construction is as depicted in Figure \ref{fig:Construction-of-acyclic}).

\begin{algorithm}
\textbf{Input}: Graph $G=(V,E)$, a set of terminals $S\subseteq V$,
a root vertex $r\in S$, cost $c:\, e\rightarrow Z^{+}$and delay
$d:\, e\rightarrow Z^{+}$ on every edge $e\in E$, and a delay constraint
$D$;

\textbf{Output}: Auxiliary acyclic graph $H$ and the terminal set
therein, $S_{H}$.
\begin{enumerate}
\item $H:=\{r\}$, $S_{H}:=\{r\}$;
\item \textbf{For }each $v_{l}\in V\setminus\{r\}$ \textbf{do}

\begin{enumerate}
\item $H:=H\cup\{v_{l}^{1},\dots,v_{l}^{D}\}$;
\item \textbf{If} $v_{l}\in S$ \textbf{then}

\begin{enumerate}
\item $H:=H\cup\{v_{l}\}\cup\{\left\langle v_{l}^{i},\, v_{l}\right\rangle \vert i=1,\dots,D\}$,
and set $c(\left\langle v_{l}^{i},\, v_{l}\right\rangle ):=0$ for
each $i$;
\item $S_{H}:=S_{H}\cup\{v_{l}\}$;
\end{enumerate}
\end{enumerate}
\item \textbf{For} each $e=\left\langle v_{j},\, v_{l}\right\rangle \in E$
that $r\notin e$ \textbf{do}

\quad{}$H:=H\cup\{\left\langle v_{j}^{i},\, v_{l}^{d(e)+i}\right\rangle \vert i=1,\,\dots,\, D-d(e)\}$,
and set $c(\left\langle v_{j}^{i},\, v_{l}^{d(e)+i}\right\rangle ):=c(e)$
for each $i$;

\item \textbf{For} each $e=\left\langle r,v_{l}\right\rangle $ \textbf{do}

\quad{}$H:=H\cup\{\left\langle r,v_{l}^{d(e)}\right\rangle \}$,
and set $c(\left\langle r,v_{l}^{d(e)}\right\rangle ):=c(e)$ .

\item Return $H$ and $S_{H}$.
\end{enumerate}
\caption{\label{alg:Construction-of-auxiliary}Construction of auxiliary graph
$H$.}
\end{algorithm}

\begin{figure}
\begin{centering}
\includegraphics[width=0.7\textwidth]{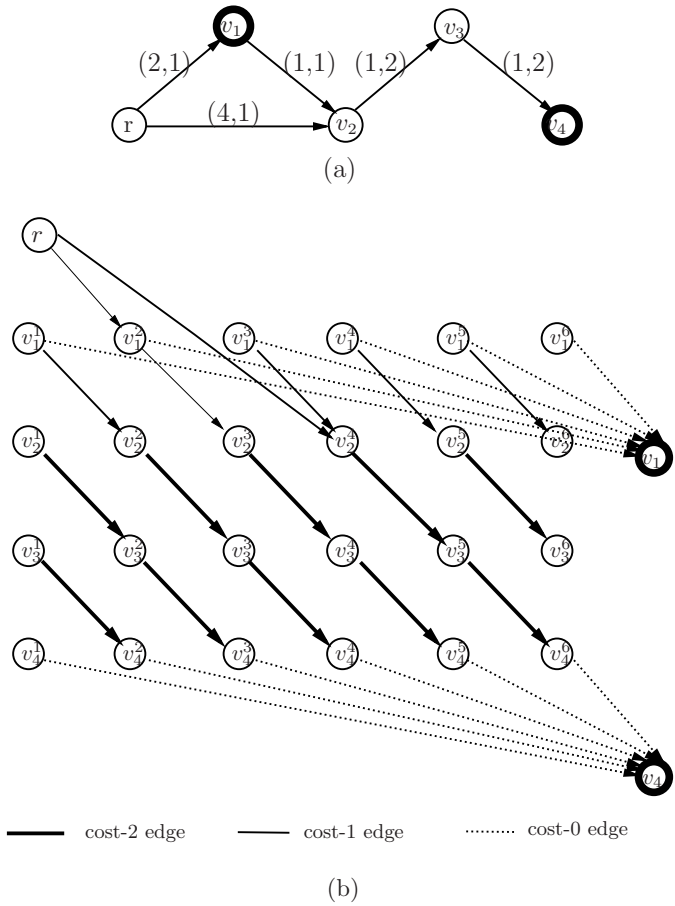}
\par\end{centering}

\caption{\label{fig:Construction-of-acyclic}Construction of acyclic graphs:
(a) is the original graph, in which $r,\, v_{1},\, v_{4}$ are terminals;
(b) is the constructed auxiliary graph, in which $r,\, v_{1},\, v_{4}$
are terminals.}
\end{figure}

It remains to show that the $r$-rooted minimum cost directed Steiner
tree in $H$ corresponds to a $r$-rooted minimum $slst$ in $G$.
\begin{lem}
\label{lem:degree}A minimum cost directed Steiner tree rooted at
$r$ in $H$ contains at most one vertex of $\{v_{l}^{1},\dots,v_{l}^{D}\}$
for each $l$.\end{lem}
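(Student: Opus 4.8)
The plan is to argue by contradiction via a local exchange/shortcutting argument. Suppose $T$ is a minimum cost directed Steiner tree rooted at $r$ in $H$ that contains two distinct vertices $v_l^{i}$ and $v_l^{j}$ (with $i<j$) among the copies $\{v_l^{1},\dots,v_l^{D}\}$ of some original vertex $v_l$. I would first observe that, because of the directed layered structure, every edge of $H$ strictly increases the layer index (an edge corresponding to $e$ goes from layer $p$ to layer $p+d(e)$ with $d(e)\ge 1$), together with the cost-$0$ edges $\langle v_l^{i},v_l\rangle$ into the collapsed terminal; hence $H$ is acyclic and any $r$-rooted path to a layered copy has layer index equal to the accumulated delay along that path.

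Next I would use the key structural fact that $T$ is rooted at $r$ and is a tree, so there is a unique path $P_i$ in $T$ from $r$ to $v_l^{i}$ and a unique path $P_j$ from $r$ to $v_l^{j}$. The subtree of $T$ hanging below $v_l^{i}$ describes, after projecting layered copies back to original vertices, a set of terminals reachable from $v_l$ with some delay budget; since $v_l^{i}$ sits at a smaller layer index $i<j$, it has a \emph{larger} remaining delay budget ($D-i > D-j$) than $v_l^{j}$. Therefore every layered edge leaving $v_l^{j}$, say $\langle v_l^{j}, w^{j+d(e)}\rangle$, has a counterpart $\langle v_l^{i}, w^{i+d(e)}\rangle$ in $H$ of the same cost, because such an edge exists in $H$ precisely when the target layer index is at most $D$, and $i+d(e) < j+d(e) \le D$. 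This lets me reroute: delete from $T$ the edge of $T$ entering $v_l^{j}$ and the entire subtree below $v_l^{j}$, and instead re-attach copies of that subtree (with all layer indices on each branch shifted down by $j-i$) below $v_l^{i}$. I must check this shift keeps all indices in the valid range $[1,D]$ — it does, since we are only decreasing indices — and that it does not disconnect any terminal from $r$, since collapsed terminals $v_l\in S_H$ are reached through cost-$0$ edges from \emph{any} layer copy, so a shifted path still hits the same terminal. The resulting subgraph still spans $S_H$ from $r$, its cost is at most that of $T$ (we removed at least the cost of the edge entering $v_l^{j}$, and the re-attached subtree has the same cost), and it has strictly fewer layered copies of $v_l$; repeating, one reaches a minimum-cost tree using at most one copy per original vertex, which contradicts nothing but shows such a minimizer exists — to get the lemma as literally stated I would instead phrase the exchange so that it strictly decreases cost, or decreases the number of duplicated copies while not increasing cost, and invoke minimality of $|T|$ among minimum-cost solutions.

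The main obstacle I anticipate is bookkeeping the re-attachment cleanly: after shifting a whole subtree's layer indices, I need to ensure the shifted edges genuinely exist in $H$ as constructed (this is where the ``$D-d(e)$ edges per original edge'' clause and the special handling of edges incident to $r$ matter), and I need to handle the degenerate cases where $v_l^{i}$ already has a child in $T$ landing on the same vertex $w$ at the same layer — there the two subtrees must be \emph{merged} rather than naively unioned, which is fine for a tree since we can just keep one copy of the shared part. A secondary subtlety is that $r$ itself is not duplicated (it is its own vertex in $H$), so the claim is vacuous for $l$ with $v_l=r$; I would note this explicitly. Once the exchange is set up, the cost and feasibility verification are routine, so I would not belabor them.
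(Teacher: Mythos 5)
Your proposal is correct and follows essentially the same route as the paper's own proof: delete the edge entering the higher-indexed copy $v_l^{j}$, shift its subtree down by $j-i$ layers, and re-attach it below $v_l^{i}$, contradicting minimality since the removed edge has positive integer cost. You are in fact more careful than the paper, which omits the verification that the shifted edges exist in $H$, the merging of coincident subtrees, and the special status of $r$.
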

\begin{proof}
Let $R$ be a $r$-rooted minimum cost directed Steiner tree in $H$.
Suppose $R$ contains $v_{l}^{j}$ and $v_{l}^{j+\Delta}$. Then we
show that $R$ is not minimum and get a contradiction. Let $R'$ be
$R$ except removing the edge entering $v_{l}^{j+\Delta}$ and replacing
every edge in the subtree of $R$ that roots at $v_{l}^{j+\Delta}$,
say $\left\langle v_{h}^{i+\Delta},v_{h'}^{i'+\Delta}\right\rangle $
by edge $\left\langle v_{h}^{i},v_{h'}^{i'}\right\rangle $. Apparently,
$R'$ spanning the same terminal set as $R$. That is, there exists
a directed Steiner tree $R'$ with less cost than $R$ in $H$. This
contradicts with the fact that $R$ is minimum.\end{proof}
\begin{thm}
\label{thm:acyclic}Let $S_{H}$ be the set of terminal vertices $\{v_{1},\dots,v_{t}\}$
in $H$. Then there exists a $r$-rooted directed Steiner tree spanning
$S_{H}$ of minimum cost $C$ in $H$ iff there exists a Steiner tree
spanning $S$ of minimum cost $C$ with delay between $r$ and every
terminal restrained by $D$ in $G$.\end{thm}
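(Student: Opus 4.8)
The plan is to establish the two directions of the "iff" by exhibiting explicit cost-preserving transformations between directed Steiner trees in $H$ and delay-bounded Steiner trees in $G$. The key structural fact is the interpretation of the superscript index: a vertex $v_l^i$ in $H$ should be thought of as "vertex $v_l$ reached from $r$ with accumulated delay exactly $i$". Under this reading, an edge $\langle v_j^i, v_l^{d(e)+i}\rangle$ of $H$ faithfully records that traversing $e=\langle v_j,v_l\rangle$ increases the accumulated delay by $d(e)$, and such an edge exists in $H$ precisely when $d(e)+i \le D$, i.e.\ only when the delay constraint is still respected. The zero-cost edges $\langle v_l^i, v_l\rangle$ into the auxiliary terminal merely let any index-copy of a terminal serve as the terminal itself, at no cost. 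I would state this correspondence as the running intuition and then make it precise in each direction.

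For the forward direction, let $R$ be a minimum-cost $r$-rooted directed Steiner tree spanning $S_H$ in $H$, of cost $C$. By Lemma~\ref{lem:degree}, $R$ uses at most one index-copy $v_l^{i(l)}$ of each vertex $v_l$. Project $R$ onto $G$ by mapping each $v_l^{i}$ to $v_l$ and each edge $\langle v_j^{i}, v_l^{d(e)+i}\rangle$ to $e=\langle v_j,v_l\rangle$, and discarding the zero-cost terminal edges. Because distinct index-copies do not both appear, this projection is injective on edges, so the image is a subgraph of $G$ of cost exactly $C$; since $R$ is a tree reaching every terminal, the image is connected and spans $S$, hence contains a spanning tree of $S$ of cost $\le C$. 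Moreover, along the unique $R$-path from $r$ to any vertex the superscripts are a running sum of the $d(e)$'s of the traversed edges, and every superscript is at most $D$, so the delay from $r$ to every terminal in this tree is $\le D$. Thus $G$ has a delay-feasible Steiner tree of cost $\le C$.

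For the converse, let $T$ be a delay-feasible Steiner tree in $G$ spanning $S$, rooted at $r$, of cost $C$. For each vertex $v_l\in T$ let $D_l$ be the delay of the $r$–$v_l$ path in $T$; feasibility at terminals plus the fact that delays along a root-path are monotone nondecreasing gives $D_l \le D$ for every $v_l$ on a root-to-terminal path, and we may prune $T$ so that every leaf is a terminal, ensuring $D_l\le D$ for all $v_l\in T$. Lift $T$ to $H$ by sending each edge $e=\langle v_j,v_l\rangle$ of $T$ (oriented away from $r$) to $\langle v_j^{D_j}, v_l^{D_l}\rangle$, where $D_l = D_j + d(e)$, using the special $r$-edge form for edges incident to $r$; append the zero-cost edge $\langle v_l^{D_l}, v_l\rangle$ for each terminal $v_l$. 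Each such $H$-edge exists by construction precisely because $D_l\le D$, and the lifted edge set forms an $r$-rooted arborescence in $H$ spanning $S_H$ of cost exactly $C$. Hence $H$ has a directed Steiner tree spanning $S_H$ of cost $\le C$. Combining the two inequalities between the optimal costs yields equality, proving the theorem.

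The main obstacle, and the only place needing care, is the converse lift: one must argue that assigning to each $v_l$ the single index $D_l$ is well-defined (there is a unique root-path in a tree, so $D_l$ is unambiguous) and that the resulting edge set is acyclic and connected in $H$ — this follows because the map $v_l\mapsto v_l^{D_l}$ is injective on the vertex set of $T$ (two occurrences would force a cycle in $T$) and preserves the tree's incidence structure. A secondary subtlety is the pruning step guaranteeing $D_l\le D$ for \emph{all} vertices of $T$, not just terminals: Steiner vertices hanging below a terminal could have delay exceeding $D$, but such branches can be deleted without affecting which terminals are spanned or increasing cost. Everything else is bookkeeping on the superscripts.
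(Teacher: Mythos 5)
Your proposal is correct and follows essentially the same two-way projection/lifting argument as the paper: the forward direction collapses index-copies via Lemma~\ref{lem:degree} and the converse lifts each vertex to the copy indexed by its root-path delay. You are in fact slightly more careful than the paper on two points --- explicitly verifying the delay bound of the projected tree via the superscripts, and pruning non-terminal leaves so that every vertex (not just every terminal) of $T$ has delay at most $D$ before lifting.
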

\begin{proof}
Let $R$ be a minimum cost directed Steiner tree rooted at $r$ in
$H$. Let $R'$ be a subgraph of $G$, in which $e(v_{j},v_{l})\in R'$
if and only if there exists $e(v_{j}^{i_{j}},v_{l}^{i_{l}})\in R$.
Then because $c(v_{j},v_{l})=c(v_{j}^{i_{j}},v_{l}^{i_{l}})$, we
have $c(R')=c(R)$. It remains to show $R'$ is a Steiner tree. From
Lemma \ref{lem:degree}, $|\{v_{l}^{1},\dots,v_{l}^{D}\}\cap R|\leq1$
holds for every $l$. So a path connecting $r$ to a terminal in $H$
corresponds to a path connecting $r$ to a terminal in $G$.  Then
since every terminal of $S_{H}$ is reachable from $r$ in $R$, all
terminals of $S$ are connected to $r$ in $R'$. Besides, because
$R$ is a tree, $R'$ contains no loops or parallel edges. Therefore,
$R'$ is a Steiner tree of $G$.

Let $R'$ be a Steiner tree in $G$. Then there is a unique path from
root $r$ to every other vertex of $R'$. Hence, every vertex of $R'$
has a unique delay from $r$. Let $R$ contains edge $(v_{j}^{d(v_{j})},v_{j})$
for every $v_{j}\in S_{H}$, and edge $(v_{j}^{d(v_{j})},v_{l}^{d(v_{j})+d(v_{j},v_{l})})$
if and only if $(v_{j},v_{l})\in R'$, where $d(v_{j})$ is the delay
from $r$ to $v_{j}$ in $R$ and $d(v_{j},v_{l})$ the delay from
$v_{j}$ to $v_{l}$. Since the delay of from $r$ to every vertex
in $R'$ is not larger than $D$, edge $(v_{j}^{d(v_{j})},v_{l}^{d(v_{j})+d(v_{j},v_{l})})$
belongs to $H$, and hence $R\subseteq H$. Then because every $v_{j}\in R$
is reachable from $r$ and no loop or parallel edge exists following
the construction of $R$, $R$ is a Steiner tree in $H$ with cost
$c(R)\leq c(R')$. This completes the proof.
\end{proof}

\subsection{A parameterized Approximation Algorithm for Shallow-Light Steiner
Tree }

This subsection shall give an exact algorithm and a parameterized
approximation algorithm for the \emph{SLST} problem. From Theorem
\ref{thm:acyclic}, an algorithm for the \emph{SLST} problem can be
obtained by computing a minimum cost directed Steiner tree in $H$.
Unfortunately, it is known that the (minimum) directed Steiner tree
problem is NP-hard and maybe even more difficult to approximate than
\emph{SLST}, i.e. only a quasi-polynomial time algorithm with a polylogarithmic
approximation factor has been developed\cite{charikar1998approximation}.
However, when the number of the terminals is a constant, the directed
Steiner tree problem is polynomial solvable, as stated in the proposition
below:
\begin{prop}
\label{prop:An-optimum-solution}\cite{guo2011parameterized}An optimum
solution to the directed Steiner tree problem can be computed within
$O(3^{t}n+2^{t}n^{2}+n^{3})$, where $t$ and $n$ are the number
of terminals and vertices respectively.
\end{prop}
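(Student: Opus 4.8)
Since this is a known result whose proof is a Dreyfus--Wagner-style dynamic program over subsets of terminals, we only sketch it. The plan is to tabulate, for every nonempty subset $X$ of the $t$ terminals and every vertex $v\in V$, a value $f(X,v)$ equal to the minimum cost of a directed Steiner tree rooted at $v$ that reaches every terminal of $X$; the quantity sought is then $f(S,r)$, where $S$ is the full terminal set.

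We begin by precomputing all pairwise shortest-path distances $\mathrm{dist}(\cdot,\cdot)$ with respect to the cost function, e.g.\ by Floyd--Warshall, in $O(n^{3})$ time, which supplies the last term of the claimed bound. The base case is $|X|=1$: set $f(\{s\},v)=\mathrm{dist}(v,s)$, as the cheapest tree rooted at $v$ reaching a single terminal $s$ is just a shortest path from $v$ to $s$. For $|X|\ge 2$ we use the recurrence
\begin{equation}
g(X,u)=\min\left\{\,f(X',u)+f(X\setminus X',u)\ :\ \emptyset\neq X'\subseteq X,\ X'\neq X\,\right\},\qquad f(X,v)=\min_{u\in V}\left(\mathrm{dist}(v,u)+g(X,u)\right),
\end{equation}
which is well founded because the subsets $X'$ occurring in $g$ are strictly smaller than $X$. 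Its correctness rests on the following structural fact: in any optimal directed Steiner tree $T$ rooted at $v$ that reaches $X$, walk along the unique outgoing path from $v$ until the first vertex $u$ that is either a terminal of $X$ or a branching vertex of $T$ (out-degree at least two); the part of $T$ rooted at $u$ then splits into two directed subtrees whose terminal sets partition $X$ into nonempty pieces $X'$ and $X\setminus X'$ --- a branching of out-degree larger than two, or a split vertex that is itself a terminal, can always be re-expressed as nested binary branchings at $u$ at no extra cost. The two directions of correctness follow by induction on $|X|$: decomposing $T$ in this way shows that $f(X,v)$ is no less than the right-hand side, while concatenating a shortest path from $v$ to $u$ with two subtrees witnessing $g(X,u)$ produces a subgraph reaching $X$ from $v$ that contains a directed Steiner tree rooted at $v$ of no greater cost, showing $f(X,v)$ is no more than the right-hand side.

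It remains to bound the running time. For a fixed $X$, evaluating $g(X,u)$ for all $u\in V$ costs $O(2^{|X|}n)$, and since $\sum_{X\subseteq S}2^{|X|}=3^{t}$ the whole merge step costs $O(3^{t}n)$. Converting each array $g(X,\cdot)$ into $f(X,\cdot)$ needs, for every $X$ and every $v$, one minimization over the $n$ candidate vertices $u$, hence $O(2^{t}n^{2})$ in total; adding the $O(n^{3})$ preprocessing yields $O(3^{t}n+2^{t}n^{2}+n^{3})$. The part we expect to need the most care is the directed split argument --- checking that restricting to binary partitions at the split vertex loses nothing, and that concatenating shortest paths (which may revisit vertices) never drives a table entry below the true optimum --- but both become routine once the table $f(X,v)$ and the recurrence above are in place.
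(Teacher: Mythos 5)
The paper does not prove this proposition at all --- it is imported verbatim by citation from \cite{guo2011parameterized}, so there is no in-paper argument to compare against. Your sketch is a correct reconstruction of what that citation contains: the standard Dreyfus--Wagner dynamic program carried over to directed graphs, with the $\sum_{X\subseteq S}2^{|X|}=3^{t}$ merge step, the $O(2^{t}n^{2})$ shortest-path relaxation, and the $O(n^{3})$ all-pairs preprocessing accounting exactly for the three terms of the stated bound; the one point you flag as delicate (the binary split at the first branching or terminal vertex $u$) is indeed the crux, and it goes through because every leaf of an optimal arborescence may be assumed to be a terminal of $X$, so the walk from $v$ must terminate and the children of $u$ induce a nonempty partition of the remaining terminals.
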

Following Algorithm \ref{alg:Construction-of-auxiliary}, Theorem
\ref{thm:acyclic} and Proposition \ref{prop:An-optimum-solution},
we could now state the exact algorithm for the \emph{SLST} problem
as in the following:

\begin{algorithm}
\textbf{Input}: Graph $G=(V,E)$, $S\subseteq V$, $r\in S$, cost
function $c:\, e\rightarrow Z^{+}$and delay function $d:\, e\rightarrow Z^{+}$
, a delay constraint $D$, and auxiliary graph $H$ with $S_{H}$,

\textbf{Output}: $R'$, an optimum solution to the \emph{SLST} problem.
\begin{enumerate}
\item $R':=\emptyset$;
\item Compute a minimum cost Steiner tree in $H$, say $R$ spanning the
terminal of $S_{H}$ by the method of \cite{ding2007finding};
\item \textbf{For} every $e(v_{j}^{i_{j}},v_{l}^{i_{l}})\in R$ \textbf{do}

\quad{}\textbf{If} $e(v_{j},v_{l})\notin R'$ \textbf{then} $R':=R'\cup\{e(v_{j},v_{l})\}$;

\item Return $R'$.
\end{enumerate}
\caption{\label{alg:An-exact-algorithm}An exact algorithm for \emph{SLST}.}

\end{algorithm}

Following Theorem \ref{thm:acyclic} and Proposition \ref{prop:An-optimum-solution},
we immediately have the correctness of Algorithm \ref{alg:An-exact-algorithm}.
For time complexity, since $H$ contains $O(m*D)$ edges, $O(n*D)$
vertices and $t$ terminals, it takes $O(3^{t}nD+2^{t}n^{2}D^{2}+n^{3}D^{3})$
time to compute a minimum Steiner tree in $H$. Hence, we have:
\begin{thm}
\label{thm:best-1}Algorithm \ref{alg:An-exact-algorithm} solved
the SLST problem correctly, and runs in time $O(3^{t}nD+2^{t}n^{2}D^{2}+n^{3}D^{3})$.
\end{thm}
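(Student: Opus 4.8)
The statement to prove is Theorem \ref{thm:best-1}: correctness and the claimed running time $O(3^t nD + 2^t n^2 D^2 + n^3 D^3)$ of Algorithm \ref{alg:An-exact-algorithm}. I would split the argument into two independent parts, correctness and complexity, since the two rely on quite different ingredients. For correctness, essentially all the work has already been done: Theorem \ref{thm:acyclic} establishes a cost-preserving correspondence between minimum directed Steiner trees spanning $S_H$ in $H$ and minimum-cost delay-bounded Steiner trees spanning $S$ in $G$. So the proof of correctness amounts to tracing through Algorithm \ref{alg:An-exact-algorithm}: Step 2 computes an optimal directed Steiner tree $R$ in $H$ (invoking the exact subroutine of \cite{ding2007finding}, which by Proposition \ref{prop:An-optimum-solution} is correct), and Steps 3--4 project $R$ down to the edge set $R'$ in $G$ by the rule $e(v_j,v_l)\in R' \Leftrightarrow \exists\, e(v_j^{i_j},v_l^{i_l})\in R$ — which is precisely the map used in the forward direction of Theorem \ref{thm:acyclic}. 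Hence $R'$ is a Steiner tree of $G$ with delay at most $D$ from $r$ to every terminal and $c(R')=c(R)=C$, and by the ``only if'' direction of Theorem \ref{thm:acyclic} this $C$ is the optimum; therefore $R'$ is an optimal $slst$.

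**Complexity.** For the running time I would first recall the size of the auxiliary graph $H$ produced by Algorithm \ref{alg:Construction-of-auxiliary}: by the count given just after that algorithm, $H$ has $O(nD)$ vertices, $O(mD)$ edges, and $t$ terminals (the terminal count is unchanged because each terminal $v_l\in S$ contributes a single terminal vertex $v_l$ to $S_H$). Constructing $H$ itself costs $O(mD)$, which is dominated. Then I would invoke Proposition \ref{prop:An-optimum-solution} with the substitution $n \mapsto O(nD)$, $m$ absorbed as usual into the $n^2$ and $n^3$ terms, and $t$ unchanged: the bound $O(3^t \hat n + 2^t \hat n^2 + \hat n^3)$ with $\hat n = O(nD)$ becomes $O(3^t nD + 2^t n^2 D^2 + n^3 D^3)$. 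Finally, Steps 3--4 of Algorithm \ref{alg:An-exact-algorithm} iterate once over the edges of $R$, which is a subgraph of $H$ and so has $O(nD)$ edges (being a tree), and each membership test/insertion into $R'$ is cheap; this contributes only $O(nD)$ (or at worst an extra low-order factor) and is swallowed by the first term. Summing the three contributions gives the claimed bound.

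**Main obstacle.** There is no deep obstacle here — the theorem is a corollary of Theorem \ref{thm:acyclic} and Proposition \ref{prop:An-optimum-solution} — but the one point that deserves genuine care is the complexity bookkeeping when passing from the abstract bound of Proposition \ref{prop:An-optimum-solution} to $H$. One must be careful that $m$ in the original bound does not secretly appear (it is implicitly $O(n^2)$ and thus already covered by the $n^2$ and $n^3$ terms), that the $2^t\hat n^2$ term really does scale as $2^t(nD)^2 = 2^t n^2 D^2$ rather than, say, mixing an $m$ with an $n$, and that the Steiner-tree subroutine of \cite{ding2007finding} cited in Step 2 enjoys the same complexity as the one in Proposition \ref{prop:An-optimum-solution} (the excerpt uses the two references \cite{guo2011parameterized,ding2007finding} interchangeably for this). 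A secondary, purely expository, point is to state explicitly that $R$ in Step 2 is required to be of \emph{minimum} cost so that Lemma \ref{lem:degree} (used inside the proof of Theorem \ref{thm:acyclic}) applies and the projection in Steps 3--4 indeed yields a tree rather than a general subgraph; once that is noted, the proof closes immediately.
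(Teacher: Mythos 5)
Your proposal is correct and follows essentially the same route as the paper: correctness is obtained by combining Theorem \ref{thm:acyclic} with the exactness of the directed Steiner tree subroutine (Proposition \ref{prop:An-optimum-solution}), and the running time by substituting the $O(nD)$ vertices, $O(mD)$ edges and $t$ terminals of $H$ into the bound $O(3^{t}n+2^{t}n^{2}+n^{3})$. Your additional bookkeeping remarks (on the projection step and on the interchangeable use of the two cited subroutines) are sound but only make explicit what the paper leaves implicit.
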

We note that Algorithm \ref{alg:An-exact-algorithm} runs in pseudo-polynomial
time, since the formula of the time complexity contains $D$. However,
following the technique of polynomial-time approximation scheme (PTAS)
design, a parameterized approximation algorithm for the \emph{SLST}
problem could proceed as: firstly compute $G'$, which is $G$ except
the delay of every edge $e$ is sat to $\left\lfloor \frac{n*d(e)}{\epsilon*D}\right\rfloor $,
such that the value of delay constraint is shrunken from $D$ to a
polynomial on $n$; secondly construct graph $H$ with the new delay
on edges; and finally run Algorithm \ref{alg:An-exact-algorithm}
on the auxiliary graph $H$ of the new delay. Formally, the parameterized
approximation algorithm for the\emph{ SLST }problem is as in the following:

\begin{algorithm}
\textbf{Input}: A given parameter $\epsilon$, graph $G=(V,E)$, $S\subseteq V$,
$r\in S$, cost $c:\, e\rightarrow Z^{+}$and delay $d:\, e\rightarrow Z^{+}$
on every edge $e\in E$, and a delay constraint $D$;

\textbf{Output}: $R"$, an approximation solution to the \emph{SLST}
problem.
\begin{enumerate}
\item \textbf{For} every edge of $G$ \textbf{do}

\quad{}$d(e):=\left\lfloor \frac{n*d(e)}{\epsilon*D}\right\rfloor $;

/{*} Compute $G'$.{*}/

\item Construct auxiliary graph $H$ and compute $S_{H}$ using Algorithm
\ref{alg:Construction-of-auxiliary};
\item Compute a minimum cost Steiner tree $R"$ subjected to the new delay
constraint $\left\lfloor \frac{n}{\epsilon}\right\rfloor $by applying
Algorithm \ref{alg:An-exact-algorithm} on $G$ and $H$ with respect
to the new delay;
\item Return $R"$.
\end{enumerate}
\caption{\label{alg:appro-algorithm-1}A parameterized approximation algorithm
for \emph{SLST}.}
\end{algorithm}

Following Algorithm \ref{alg:appro-algorithm-1}, the delay constraint
in $G'$ is $\left\lfloor \frac{n}{\epsilon}\right\rfloor $. Then
from Lemma \ref{thm:best-1}, the time complexity of the algorithm
is $O(3^{t}\frac{n^{2}}{\epsilon}+2^{t}\frac{n^{4}}{\epsilon}+\frac{n^{6}}{\epsilon})$
after shrinking $D$ to $O(\frac{n}{\epsilon})$. Hence, we have
\begin{lem}
\label{lm:best}Algorithm \ref{alg:appro-algorithm-1} runs in time
$O(3^{t}\frac{n^{2}}{\epsilon}+2^{t}\frac{n^{4}}{\epsilon}+\frac{n^{6}}{\epsilon})$.
.
\end{lem}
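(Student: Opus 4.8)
The plan is to charge the running time of Algorithm~\ref{alg:appro-algorithm-1} step by step and to observe that the cost is dominated by the single call to the exact algorithm in its last step, once the delay bound has been shrunk to a polynomial in $n$.

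First I would dispatch Steps~1 and~2. Replacing $d(e)$ by $\lfloor n\,d(e)/(\epsilon D)\rfloor$ for every edge is one arithmetic operation per edge, hence $O(m)$ time, which is negligible; the point worth recording is its \emph{effect}: since the delay bound $D$ is rescaled in exactly the same way, the effective delay constraint in $G'$ becomes $D':=\lfloor n/\epsilon\rfloor$, a quantity polynomial in $n$ for fixed $\epsilon$ rather than an arbitrary integer. Step~2 then runs Algorithm~\ref{alg:Construction-of-auxiliary} on $G'$; from the construction the auxiliary graph $H$ has $O(nD')$ vertices, $O(mD')$ edges and $t$ terminals, and it is assembled in time linear in its size, i.e. $O(mD')=O(mn/\epsilon)$. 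I would state the value of $D'$ explicitly before going on, since every later estimate depends on $D'$ and not on the original $D$.

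Next, Step~3 applies Algorithm~\ref{alg:An-exact-algorithm} to $G'$ together with its auxiliary graph. By Theorem~\ref{thm:best-1} that algorithm runs in $O(3^{t}nD + 2^{t}n^{2}D^{2} + n^{3}D^{3})$ on an instance whose delay bound is $D$; substituting the shrunken bound $D'=\lfloor n/\epsilon\rfloor$ in place of $D$ turns this into $O\left(3^{t}n\cdot\frac{n}{\epsilon} + 2^{t}n^{2}\cdot\frac{n^{2}}{\epsilon^{2}} + n^{3}\cdot\frac{n^{3}}{\epsilon^{3}}\right)$, which, treating $\epsilon$ as a fixed constant so that the powers of $1/\epsilon$ collapse into a single $1/\epsilon$, equals $O\left(3^{t}\frac{n^{2}}{\epsilon} + 2^{t}\frac{n^{4}}{\epsilon} + \frac{n^{6}}{\epsilon}\right)$. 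Adding back the $O(m)$ and $O(mn/\epsilon)$ contributions of Steps~1--2 changes nothing since $m\le n^{2}$, so the total is the claimed bound.

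I do not expect a genuine obstacle here; the only thing to be careful about is the bookkeeping — making sure the call to Algorithm~\ref{alg:An-exact-algorithm} in Step~3 genuinely uses the new constraint $D'=\lfloor n/\epsilon\rfloor$ rather than the original $D$, and being consistent about whether the $\epsilon$-dependence is displayed as $1/\epsilon$ or as $1/\epsilon^{k}$. Modulo that, the statement is a direct substitution of $D'$ for $D$ into Theorem~\ref{thm:best-1}.
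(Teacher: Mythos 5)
Your proof takes exactly the same route as the paper: substitute the shrunken delay bound $D'=\lfloor n/\epsilon\rfloor$ into the running time of Theorem~\ref{thm:best-1} (the paper dispatches Steps~1--2 even more tersely than you do). The one wrinkle you rightly flag --- that literal substitution yields $2^{t}n^{4}/\epsilon^{2}$ and $n^{6}/\epsilon^{3}$ rather than the single $1/\epsilon$ factors in the stated bound --- is present in the paper's own statement as well, so your handling (treating $\epsilon$ as fixed) is at least as careful as the original.
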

It remains to show the approximation of the algorithm, which is given
by the following theorem:
\begin{thm}
\label{thm:ratio}Algorithm \ref{alg:appro-algorithm-1} computes
a Steiner tree spanning all terminals of $S$ in $G$ with cost bounded
by the cost of an optimum $slst$, and delay bounded by $(1+\epsilon)D$.\end{thm}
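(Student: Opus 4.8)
The plan is to prove the two claims separately: first the cost bound, then the delay bound.

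For the cost bound, I would argue that the optimal $slst$ for the original instance remains feasible for the scaled instance. Let $T^{*}$ be an optimum $slst$ in $G$ with delay at most $D$ from $r$ to every terminal. Rounding down each edge delay to $\lfloor n\,d(e)/(\epsilon D)\rfloor$ can only decrease path delays, so in $G'$ the delay from $r$ to every terminal along $T^{*}$ is at most $\lfloor n/\epsilon\rfloor$ after the scaling (since for any root-to-terminal path $P$ in $T^{*}$, $\sum_{e\in P}\lfloor n\,d(e)/(\epsilon D)\rfloor \le \sum_{e\in P} n\,d(e)/(\epsilon D) = (n/(\epsilon D))\sum_{e\in P}d(e) \le n/\epsilon$, and the left side is an integer). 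Hence $T^{*}$ is a feasible solution to the scaled instance with the delay bound $\lfloor n/\epsilon\rfloor$, so by Theorem~\ref{thm:acyclic} it corresponds to a directed Steiner tree in the (new) auxiliary graph $H$, and by Proposition~\ref{prop:An-optimum-solution} and Theorem~\ref{thm:best-1} Algorithm~\ref{alg:An-exact-algorithm} returns a tree $R''$ of cost at most $c(T^{*})$. Since edge costs are untouched by the scaling, $c(R'') \le c(T^{*})$, which is the desired cost bound.

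For the delay bound, I would go the other direction: take the tree $R''$ returned by the algorithm, which satisfies the scaled delay constraint $\lfloor n/\epsilon\rfloor$ along every root-to-terminal path, and bound its true delay in $G$. For any root-to-terminal path $P$ in $R''$, we have $\sum_{e\in P}\lfloor n\,d(e)/(\epsilon D)\rfloor \le \lfloor n/\epsilon\rfloor \le n/\epsilon$. Using $\lfloor x\rfloor > x-1$, this gives $\sum_{e\in P}\bigl(n\,d(e)/(\epsilon D) - 1\bigr) < n/\epsilon$, i.e. $(n/(\epsilon D))\sum_{e\in P}d(e) < n/\epsilon + |P|$, so $\sum_{e\in P}d(e) < D + \epsilon D|P|/n$. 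Since $P$ is a simple path it uses at most $n-1 < n$ edges, hence $\sum_{e\in P}d(e) < D + \epsilon D = (1+\epsilon)D$, and as delays are integers (or at least real, depending on the reading) this yields the claimed bound $(1+\epsilon)D$.

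The main obstacle — really a bookkeeping subtlety rather than a deep difficulty — is making the ``$|P| < n$'' step airtight: one must be sure that the tree $R''$, once pulled back from the layered auxiliary graph $H$ to $G$ via Algorithm~\ref{alg:An-exact-algorithm}, is genuinely a tree in $G$ (no repeated vertices on a root-to-terminal path), which is exactly what Lemma~\ref{lem:degree} and Theorem~\ref{thm:acyclic} guarantee, so I would invoke them explicitly here. A secondary point worth stating carefully is that the scaled delays are still positive integers (or handling the case $d(e) < \epsilon D/n$ where the floor could be $0$) so that Algorithm~\ref{alg:Construction-of-auxiliary} and Proposition~\ref{prop:An-optimum-solution} apply verbatim; if zero-delay edges arise they can be contracted or handled as cost-$0$ shortcuts without affecting either bound.
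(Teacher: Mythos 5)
Your proof is correct and follows essentially the same route as the paper's: the cost bound comes from observing that the optimum $slst$ remains feasible under the rounded-down delays, and the delay bound comes from summing the inequality $\lfloor x\rfloor > x-1$ over the at most $n$ edges of a root-to-terminal path and rescaling by $\epsilon D/n$. You are in fact somewhat more careful than the paper at two points it leaves implicit — the explicit verification that the optimum satisfies the scaled constraint $\lfloor n/\epsilon\rfloor$, and the observation that rounding can produce zero-delay edges (the paper assumes $d:E\rightarrow Z^{+}$ throughout), which does deserve the explicit handling you propose.
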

\begin{proof}
Clearly, an optimum $slst$ in $G$ will satisfy the new delay constraint
$\left\lfloor \frac{n}{\epsilon}\right\rfloor $ in $G'$. Then since
$R"$ is a optimum solution to \emph{SLST} in $G'$, it is with cost
not larger than the cost of an optimum $slst$ in $G$.

It remains to show the delay of $R"$ in $G$. Let $P$ be an arbitrary
path in $R"$. Then since the delay of $R"$ in $G'$ is bounded by
$\left\lfloor \frac{n}{\epsilon}\right\rfloor $, we have:

\begin{equation}
\sum_{e\in P}\left\lfloor \frac{n*d(e)}{\epsilon*D}\right\rfloor \leq\left\lfloor \frac{n}{\epsilon}\right\rfloor \label{eq:1}
\end{equation}

Following the definition of $\left\lfloor \,\right\rfloor $, $\frac{n*d(e)}{\epsilon*D}<1+\left\lfloor \frac{n*d(e)}{\epsilon*D}\right\rfloor $
holds, and hence:

\begin{equation}
\sum_{e\in P}\frac{n*d(e)}{\epsilon*D}<\sum_{e\in P}(\left\lfloor \frac{n*d(e)}{\epsilon*D}\right\rfloor +1)\label{eq:2}
\end{equation}

Combining Inequality (\ref{eq:1}) and (\ref{eq:2}) yields:

\begin{equation}
\sum_{e\in P}\frac{n*d(e)}{\epsilon*D}<\sum_{e\in P}\left\lfloor \frac{n*d(e)}{\epsilon*D}\right\rfloor +\sum_{e\in P}1\leq\left\lfloor \frac{n}{\epsilon}\right\rfloor +n\label{eq:3}
\end{equation}

Therefore, following Inequality (\ref{eq:3}), the delay of $R"$
in $G$ is:
\[
\sum_{e\in P}d(e)=\frac{\epsilon D}{n}\sum_{e\in P}\frac{n*d(e)}{\epsilon*D}<\frac{\epsilon D}{n}*(\left\lfloor \frac{n}{\epsilon}\right\rfloor +n)=(1+\epsilon)D.
\]

This completes the proof.
\end{proof}

\section{Conclusion}

This paper investigated exact algorithms and then parameterized approximation
algorithms for the SLST problem. The first result is an exact algorithm
that computes optimum $slst$ in time $O(3^{t}nD+2^{t}n^{2}D^{2}+n^{3}D^{3})$,
and the second result is a factor-$(1+\epsilon,\,1)$ approximation
algorithm with time complexity $O(3^{t}\frac{n^{2}}{\epsilon}+2^{t}\frac{n^{4}}{\epsilon}+\frac{n^{6}}{\epsilon})$.
A problem remained open is whether design of algorithms for the SLST
problem with polylogarithmic approximation ratio is possible.

\bibliographystyle{plain}
\bibliography{drmst}

\begin{thebibliography}{10}

\bibitem{althaus2005approximating}
Ernst Althaus, Stefan Funke, Sariel Har-Peled, Jochen Konemann, Edgar~A Ramos,
  and Martin Skutella.
\newblock Approximating< i> k</i>-hop minimum-spanning trees.
\newblock {\em Operations Research Letters}, 33(2):115--120, 2005.

\bibitem{bar2001generalized}
Judit Bar-Ilan, Guy Kortsarz, and David Peleg.
\newblock Generalized submodular cover problems and applications.
\newblock {\em Theoretical Computer Science}, 250(1):179--200, 2001.

\bibitem{charikar1998approximation}
M.~Charikar, C.~Chekuri, T.~Cheung, Z.~Dai, A.~Goel, S.~Guha, and M.~Li.
\newblock Approximation algorithms for directed steiner problems.
\newblock In {\em Proceedings of the ninth annual ACM-SIAM symposium on
  Discrete algorithms}, pages 192--200. Society for Industrial and Applied
  Mathematics, 1998.

\bibitem{ding2007finding}
Bolin Ding, J~Xu~Yu, Shan Wang, Lu~Qin, Xiao Zhang, and Xuemin Lin.
\newblock Finding top-k min-cost connected trees in databases.
\newblock In {\em Data Engineering, 2007. ICDE 2007. IEEE 23rd International
  Conference on}, pages 836--845. IEEE, 2007.

\bibitem{elkin2011steiner}
M.~Elkin and S.~Solomon.
\newblock Steiner shallow-light trees are exponentially lighter than spanning
  ones.
\newblock In {\em Foundations of Computer Science (FOCS), 2011 IEEE 52nd Annual
  Symposium on}, pages 373--382. IEEE, 2011.

\bibitem{guo2011parameterized}
Jiong Guo, Rolf Niedermeier, and Ondrej Such{\`y}.
\newblock Parameterized complexity of arc-weighted directed steiner problems.
\newblock {\em SIAM Journal on Discrete Mathematics}, 25(2):583--599, 2011.

\bibitem{guo2013improved}
Longkun Guo, Hong Shen, and Kewen Liao.
\newblock Improved approximation algorithms for computing k disjoint paths
  subject to two constraints.
\newblock In {\em Computing and Combinatorics}, pages 325--336. Springer, 2013.

\bibitem{hajiaghayi2006approximating}
M.T. Hajiaghayi, G.~Kortsarz, and M.~Salavatipour.
\newblock Approximating buy-at-bulk and shallow-light k-steiner trees.
\newblock {\em Approximation, Randomization, and Combinatorial Optimization.
  Algorithms and Techniques}, pages 152--163, 2006.

\bibitem{kapoor2007bounded}
S.~Kapoor and M.~Sarwat.
\newblock Bounded-diameter minimum-cost graph problems.
\newblock {\em Theory of Computing Systems}, 41(4):779--794, 2007.

\bibitem{khandekar2013some}
Rohit Khandekar, Guy Kortsarz, and Zeev Nutov.
\newblock On some network design problems with degree constraints.
\newblock {\em Journal of Computer and System Sciences}, 2013.

\bibitem{khuller1995balancing}
S.~Khuller, B.~Raghavachari, and N.~Young.
\newblock Balancing minimum spanning trees and shortest-path trees.
\newblock {\em Algorithmica}, 14(4):305--321, 1995.

\bibitem{naor1997improved}
J.~Naor and B.~Schieber.
\newblock Improved approximations for shallow-light spanning trees.
\newblock In {\em Foundations of Computer Science, 1997. Proceedings., 38th
  Annual Symposium on}, pages 536--541. IEEE, 1997.

\bibitem{salama1997delay}
H.F. Salama, D.S. Reeves, and Y.~Viniotis.
\newblock The delay-constrained minimum spanning tree problem.
\newblock In {\em Computers and Communications, 1997. Proceedings., Second IEEE
  Symposium on}, pages 699--703. IEEE, 1997.

\end{thebibliography}

\end{document}